\documentclass[11pt,a4paper]{article}

\usepackage[utf8]{inputenc}
\usepackage[T1]{fontenc}
\usepackage{amsmath,amssymb,amsthm,mathtools}
\usepackage{algorithm}
\usepackage{algpseudocode}
\usepackage{graphicx}
\usepackage[colorlinks=true,linkcolor=blue,citecolor=blue,urlcolor=blue]{hyperref}
\usepackage{xcolor}
\usepackage{booktabs}
\usepackage{multirow}
\usepackage{caption}
\usepackage{subcaption}
\usepackage{enumitem}
\usepackage{cleveref}
\usepackage{pifont}
\usepackage[margin=2.5cm]{geometry}
\usepackage{thmtools}

\newtheorem{theorem}{Theorem}[section]

\newtheorem{proposition}[theorem]{Proposition}

\theoremstyle{definition}
\newtheorem{definition}[theorem]{Definition}

\newcommand{\R}{\mathbb{R}}
\newcommand{\Z}{\mathbb{Z}}

\newcommand{\norm}[1]{\left\lVert #1 \right\rVert}

\newcommand{\Enc}{\mathsf{Enc}}
\newcommand{\Dec}{\mathsf{Dec}}
\newcommand{\KeyGen}{\mathsf{KeyGen}}
\newcommand{\Encaps}{\mathsf{Encaps}}
\newcommand{\Decaps}{\mathsf{Decaps}}
\newcommand{\Prove}{\mathsf{Prove}}
\newcommand{\Verify}{\mathsf{Verify}}
\newcommand{\Commit}{\mathsf{Commit}}

\title{Zero-Knowledge Federated Learning with Lattice-Based Hybrid Encryption for Quantum-Resilient Medical AI}

\author{
Edouard Lansiaux\thanks{Corresponding author: \texttt{edouard1.lansiaux@chu-lille.fr}} \\
\textit{Department of Emergency Medicine \& STaR-AI Research Group} \\
\textit{CHU de Lille, Universit\'e de Lille} \\
\textit{Lille, France}
}

\date{\today}

\begin{document}

\maketitle

\begin{abstract}
Federated Learning (FL) enables collaborative training of medical AI models across hospitals without centralizing patient data. However, the exchange of model updates exposes critical vulnerabilities: gradient inversion attacks can reconstruct patient information, Byzantine clients can poison the global model, and the \emph{Harvest Now, Decrypt Later} (HNDL) threat renders today's encrypted traffic vulnerable to future quantum adversaries. 

We introduce \textbf{ZKFL-PQ} (\emph{Zero-Knowledge Federated Learning, Post-Quantum}), a three-tiered cryptographic protocol that hybridizes (i) ML-KEM (FIPS~203) for quantum-resistant key encapsulation, (ii) lattice-based Zero-Knowledge Proofs for verifiable \emph{norm-constrained} gradient integrity, and (iii) BFV homomorphic encryption for privacy-preserving aggregation. 
We formalize the security model and prove correctness and zero-knowledge properties under the Module-LWE, Ring-LWE, and SIS assumptions \emph{in the classical random oracle model}. We evaluate ZKFL-PQ on synthetic medical imaging data across 5 federated clients over 10 training rounds. Our protocol achieves \textbf{100\% rejection of norm-violating updates} while maintaining model accuracy at 100\%, compared to a catastrophic drop to 23\% under standard FL. 
The computational overhead (factor $\sim$20$\times$) is analyzed and shown to be compatible with clinical research workflows operating on daily or weekly training cycles. We emphasize that the current defense guarantees rejection of large-norm malicious updates; robustness against subtle low-norm or directional poisoning remains future work.
\end{abstract}

\textbf{Keywords:} Post-Quantum Cryptography, Federated Learning, Zero-Knowledge Proofs, Homomorphic Encryption, Medical AI, ML-KEM, Lattice-Based Cryptography

\section{Introduction}
\label{sec:intro}

The digitization of healthcare generates enormous volumes of sensitive data---medical imaging, genomic sequences, electronic health records---which are ideal substrates for training powerful AI models. Privacy regulations (GDPR, HIPAA) and ethical constraints, however, prevent centralized data aggregation across institutions. Federated Learning (FL) \cite{mcmahan2017fedavg} offers an elegant solution: clients (hospitals) train models locally and share only parameter updates (gradients) with a central aggregation server.

Despite its privacy-preserving premise, FL suffers from three fundamental vulnerabilities. First, \emph{gradient inversion attacks} \cite{zhu2019deep} can reconstruct raw patient data from shared model updates with alarming fidelity. Second, \emph{Byzantine attacks} allow malicious or compromised clients to poison the global model by submitting adversarial gradients \cite{blanchard2017machine}. Third, and most critically for long-lived medical data, the \emph{Harvest Now, Decrypt Later} (HNDL) strategy \cite{mosca2018cybersecurity} enables a quantum-capable adversary to record today's encrypted traffic and retroactively decrypt it once a sufficiently powerful quantum computer is available.

Medical data requires confidentiality guarantees spanning patient lifetimes. An MRI scan encrypted with RSA-2048 today may be decryptable within 15--20 years \cite{nist_pqc_report}, well within the period during which the data must remain protected. This creates an urgent need for \emph{quantum-resilient} cryptographic protocols for FL.

\paragraph{Contributions.} We propose \textbf{ZKFL-PQ}, a three-tiered cryptographic framework for federated learning that addresses all three vulnerabilities simultaneously:

\begin{enumerate}[label=(\roman*)]
    \item \textbf{Quantum-resistant channels} via ML-KEM (FIPS~203) \cite{nist_fips203_2024}, ensuring that key exchange and data encapsulation resist both classical and quantum attacks.
    \item \textbf{Verifiable gradient integrity} via lattice-based Zero-Knowledge Proofs, enabling each client to prove that its update satisfies predefined constraints (e.g., bounded $\ell_2$-norm) without revealing the update itself.
    \item \textbf{Privacy-preserving aggregation} via BFV homomorphic encryption \cite{fan2012bfv}, allowing the server to compute the average of encrypted gradients without accessing individual contributions.
\end{enumerate}

We formalize the protocol, prove its security properties under standard lattice hardness assumptions, and validate it experimentally on synthetic medical imaging data with 5 federated clients over 10 rounds, including a Byzantine adversary that is detected with 100\% accuracy.

\section{Preliminaries and Notation}
\label{sec:prelim}

\subsection{Notation}

Let $n$ be a power of 2, $q$ a prime modulus, and $R_q = \Z_q[X]/(X^n + 1)$ the cyclotomic polynomial ring. We write $\mathbf{a} \in R_q^k$ for a module element (vector of $k$ ring elements). The centered binomial distribution with parameter $\eta$ is denoted $\text{CBD}_\eta$. The discrete Gaussian distribution over $\Z^m$ with standard deviation $\sigma$ is denoted $D_{\Z^m, \sigma}$. We use $\lambda$ for the security parameter and $\text{negl}(\lambda)$ for negligible functions.

\subsection{The Module-Learning With Errors Problem}

\begin{definition}[MLWE Problem \cite{langlois2015mlwe}]
Let $n, k, q$ be positive integers and $\chi$ an error distribution over $R_q$. The \emph{Module-Learning With Errors} problem $\text{MLWE}_{n,k,q,\chi}$ is: given $(\mathbf{A}, \mathbf{b})$ where $\mathbf{A} \xleftarrow{\$} R_q^{k \times k}$ and $\mathbf{b} = \mathbf{A}\mathbf{s} + \mathbf{e}$ with $\mathbf{s} \xleftarrow{\$} \chi^k$ and $\mathbf{e} \xleftarrow{\$} \chi^k$, distinguish $(\mathbf{A}, \mathbf{b})$ from $(\mathbf{A}, \mathbf{u})$ where $\mathbf{u} \xleftarrow{\$} R_q^k$.
\end{definition}

The MLWE problem is conjectured to be hard for both classical and quantum computers when $\chi = \text{CBD}_\eta$ with appropriate parameters \cite{peikert2016decade}.

\subsection{ML-KEM (FIPS 203)}
\label{sec:mlkem}

ML-KEM \cite{nist_fips203_2024} is a key encapsulation mechanism based on the MLWE problem. For ML-KEM-768 (NIST security level 3, targeting 128-bit post-quantum security):
\begin{itemize}[nosep]
    \item $n = 256$, $k = 3$, $q = 3329$, $\eta_1 = \eta_2 = 2$
    \item $\KeyGen() \to (ek, dk)$: Sample $\mathbf{A} \xleftarrow{\$} R_q^{k \times k}$, $\mathbf{s}, \mathbf{e} \xleftarrow{\$} \text{CBD}_{\eta_1}^k$. Set $\mathbf{t} = \mathbf{A}\mathbf{s} + \mathbf{e}$. Return $ek = (\mathbf{A}, \mathbf{t})$, $dk = \mathbf{s}$.
    \item $\Encaps(ek) \to (c, K)$: Encapsulate a shared secret $K$ into ciphertext $c$.
    \item $\Decaps(dk, c) \to K$: Recover $K$ from $c$ using secret key $dk$.
\end{itemize}

\subsection{BFV Homomorphic Encryption}
\label{sec:bfv}

The BFV scheme \cite{fan2012bfv} operates over $R_q$ with plaintext modulus $t < q$ and scaling factor $\Delta = \lfloor q/t \rfloor$.

\begin{definition}[BFV Encryption]
Given public key $pk = (p_0, p_1)$ where $p_0 = -(a \cdot s + e) \bmod q$ and $p_1 = a$ for secret $s \leftarrow \chi_s$ and $a \xleftarrow{\$} R_q$:
\begin{align}
    \Enc(m) &= (c_0, c_1) \quad \text{where} \\
    c_0 &= p_0 \cdot u + e_0 + \Delta \cdot m \pmod{q} \\
    c_1 &= p_1 \cdot u + e_1 \pmod{q}
\end{align}
with $u \leftarrow \chi_s$ and $e_0, e_1 \leftarrow \chi_e$.
\end{definition}

\begin{proposition}[Additive Homomorphism]
For any $m_1, m_2 \in R_t$:
\begin{equation}
    \Dec(\Enc(m_1) + \Enc(m_2)) = m_1 + m_2 \pmod{t}
\end{equation}
provided the accumulated noise does not exceed $q/2t$.
\end{proposition}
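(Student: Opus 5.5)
The plan is a direct computation: write each ciphertext in the form ``scaled message plus noise'' under the secret key, and observe that this form is preserved under componentwise addition. First I fix the decryption map used with the key convention of the BFV Encryption definition. Since $p_0 = -(a s + e)$ and $p_1 = a$, the natural decryption is
\[
\Dec(c_0,c_1) = \left\lfloor \tfrac{t}{q}\,\bigl[c_0 + c_1 s\bigr]_q \right\rceil \bmod t .
\]
For a fresh ciphertext of $m$ with randomness $u,e_0,e_1$, expanding gives
\[
c_0 + c_1 s = -(as+e)u + e_0 + \Delta m + a u s + e_1 s = \Delta m + v, \qquad v := e_0 + e_1 s - e u ,
\]
modulo $q$; the $aus$ terms cancel because $R_q$ is commutative. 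So every fresh encryption satisfies $c_0 + c_1 s \equiv \Delta m + v \pmod q$ with a small noise term $v$.

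Next, I define ciphertext addition componentwise: $(c_0,c_1) + (c_0',c_1') = (c_0+c_0',\, c_1+c_1')$. Linearity of $c \mapsto c_0 + c_1 s$ gives $\Delta(m_1+m_2) + v_1 + v_2$ modulo $q$. Here $m_1+m_2$ is computed in $R$ and may have coefficients in $[0,2t)$. I write $m_1 + m_2 = [m_1+m_2]_t + t\,r$ with $r$ having coefficients in $\{0,1\}$, and use $\Delta t = q - (q \bmod t)$. This yields
\[
c_0 + c_1 s \equiv \Delta [m_1+m_2]_t + v_{\mathrm{tot}} \pmod q, \qquad v_{\mathrm{tot}} = v_1 + v_2 - (q \bmod t)\, r ,
\]
so the sum again has the shape of an encryption of $[m_1+m_2]_t$, now with noise $v_{\mathrm{tot}}$. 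This quantity is the ``accumulated noise'' in the statement.

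The last step is the rounding argument. Multiplying by $t/q$ gives
\[
\tfrac{t}{q}\bigl(\Delta M + v_{\mathrm{tot}}\bigr) = M - \tfrac{(q \bmod t)\,M}{q} + \tfrac{t}{q} v_{\mathrm{tot}}, \qquad M := [m_1+m_2]_t .
\]
Rounding returns $M$ exactly when every coefficient of the error satisfies $\bigl|\tfrac{t}{q} v_{\mathrm{tot}} - \tfrac{(q \bmod t)M}{q}\bigr| < \tfrac12$. The hypothesis $\norm{v_{\mathrm{tot}}}_\infty < q/2t$ gives this. The extra term $(q \bmod t)M/q$, which is at most $t^2/q$, is absorbed either by reading the hypothesis as including the rounding error, as is standard, or by requiring a slightly stricter bound such as $q/2t - t$.

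I expect the only delicate point to be this bookkeeping: the wrap-around of $m_1+m_2$ modulo $t$, and the fact that $\Delta = \lfloor q/t \rfloor$ does not divide $q$ exactly. Everything else is routine algebra, and no hardness assumption is needed since this is a correctness statement.
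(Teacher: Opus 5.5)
Your proof is correct and follows the same route as the paper. The paper gives no separate proof of this proposition, but its proof of the Aggregation Correctness theorem uses exactly this computation: $c_0 + c_1 s \equiv \Delta m + e \pmod{q}$, then linearity under componentwise addition, then correct decryption while the accumulated noise stays below $q/(2t)$. You track the mod-$t$ wrap-around and the $(q \bmod t)M/q$ term, which the paper skips by writing $q/(2t) = \Delta/2$; with the paper's parameters ($q = 2^{32}-5$, $t = 2^{16}$, so $t^2/q \approx 1$) the ``slightly stricter'' bound $q/(2t) - t$ is negative, so only your first reading (noise measured against $\tfrac{q}{t}M$ rather than $\Delta M$) gives a non-vacuous condition.
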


This property is essential: the aggregation server computes $\sum_{i \in S} \Enc(\Delta w_i)$ and obtains $\Enc(\sum_{i \in S} \Delta w_i)$ without learning individual gradients.

\subsection{Zero-Knowledge Proofs}
\label{sec:zkp}

\begin{definition}[Zero-Knowledge Proof System]
A ZKP system for a language $\mathcal{L}$ consists of $(\Prove, \Verify)$ satisfying:
\begin{enumerate}[nosep]
    \item \textbf{Completeness:} If $x \in \mathcal{L}$, then $\Verify(x, \Prove(x, w)) = 1$.
    \item \textbf{Soundness:} If $x \notin \mathcal{L}$, then $\Pr[\Verify(x, \pi) = 1] \leq \text{negl}(\lambda)$ for any $\pi$.
    \item \textbf{Zero-Knowledge:} There exists a simulator $\mathcal{S}$ such that for all $x \in \mathcal{L}$, the output of $\mathcal{S}(x)$ is computationally indistinguishable from $\Prove(x, w)$.
\end{enumerate}
\end{definition}

\section{Threat Model}
\label{sec:threat}

We consider a federated learning system with $N$ clients (hospitals) and one aggregation server. The adversary $\mathcal{A}$ has the following capabilities:

\begin{enumerate}[label=\textbf{T\arabic*}]
    \item \textbf{Passive eavesdropping:} $\mathcal{A}$ can observe all network traffic between clients and the server.
    \item \textbf{Byzantine clients:} $\mathcal{A}$ can corrupt up to $f < N/2$ clients, which may submit arbitrary (malicious) model updates.
    \item \textbf{HNDL adversary:} $\mathcal{A}$ records all encrypted traffic and possesses a future quantum computer capable of solving the factoring and discrete logarithm problems in polynomial time.
    \item \textbf{Honest-but-curious server:} The server follows the protocol but attempts to learn information from client updates.
\end{enumerate}

\begin{definition}[Security Goals]
The protocol must provide:
\begin{enumerate}[nosep]
    \item \emph{Post-quantum confidentiality}: No quantum adversary can recover the shared secrets or individual gradients from recorded traffic.
    \item \emph{Norm-constrained Byzantine rejection}: Gradient updates with $\norm{\Delta w_i}_2 > \tau$ are detected and rejected with high probability.
    \item \emph{Gradient privacy}: The server learns only the aggregate update $\sum \Delta w_i$, not individual $\Delta w_i$.
    \item \emph{Verifiability}: Each client's update is certified to satisfy protocol constraints.
\end{enumerate}
\end{definition}

\section{The ZKFL-PQ Protocol}
\label{sec:protocol}

\subsection{System Architecture}

The protocol operates in rounds $t = 1, 2, \ldots, T$. At each round, the server broadcasts the global model $w^{(t)}$ and each client $i$ computes a local update $\Delta w_i^{(t)}$ via stochastic gradient descent on its private data $D_i$.

\begin{definition}[Protocol Layers]
ZKFL-PQ consists of three cryptographic layers:
\begin{enumerate}[nosep]
    \item \textbf{Layer 1 (Transport):} ML-KEM-768 key encapsulation for quantum-resistant session key establishment.
    \item \textbf{Layer 2 (Verification):} Lattice-based $\Sigma$-protocol ZKP proving $\norm{\Delta w_i^{(t)}}_2 \leq \tau$ for a public threshold $\tau$.
    \item \textbf{Layer 3 (Computation):} BFV homomorphic encryption for server-side aggregation on ciphertexts.
\end{enumerate}
\end{definition}

\subsection{Detailed Protocol Description}

\begin{algorithm}[t]
\caption{ZKFL-PQ: One Federated Learning Round}
\label{alg:zkflpq}
\begin{algorithmic}[1]
\Require Global model $w^{(t)}$, clients $\{C_1, \ldots, C_N\}$, threshold $\tau$, server KEM keys $(ek_S, dk_S)$, HE keys $(pk_{\text{HE}}, sk_{\text{HE}})$
\Ensure Updated global model $w^{(t+1)}$
\State Server broadcasts $w^{(t)}$ to all clients
\For{each client $C_i$ in parallel}
    \State $\Delta w_i^{(t)} \gets \text{LocalSGD}(w^{(t)}, D_i)$ \Comment{Local training}
    \State $\pi_i \gets \Prove(\Delta w_i^{(t)}, \tau)$ \Comment{ZKP: $\norm{\Delta w_i^{(t)}}_2 \leq \tau$}
    \State $(c_i^{\text{KEM}}, K_i) \gets \Encaps(ek_S)$ \Comment{ML-KEM session key}
    \State $\mathbf{ct}_i \gets \text{BFV}.\Enc_{pk_{\text{HE}}}(\Delta w_i^{(t)})$ \Comment{HE encryption}
    \State $E_i \gets \text{AES}_{K_i}(\mathbf{ct}_i \| \pi_i)$ \Comment{Symmetric encryption}
    \State \textbf{send} $(c_i^{\text{KEM}}, E_i)$ to server
\EndFor
\State \textbf{Server:}
\State $S_{\text{valid}} \gets \emptyset$
\For{each received $(c_i^{\text{KEM}}, E_i)$}
    \State $K_i \gets \Decaps(dk_S, c_i^{\text{KEM}})$ \Comment{Recover session key}
    \State $(\mathbf{ct}_i, \pi_i) \gets \text{AES}_{K_i}^{-1}(E_i)$ \Comment{Decrypt payload}
    \If{$\Verify(\pi_i, \tau) = 1$} \Comment{Verify ZKP}
        \State $S_{\text{valid}} \gets S_{\text{valid}} \cup \{i\}$
    \EndIf
\EndFor
\State $\mathbf{ct}_{\text{agg}} \gets \sum_{i \in S_{\text{valid}}} \mathbf{ct}_i$ \Comment{Homomorphic aggregation}
\State $\Delta w_{\text{agg}}^{(t)} \gets \frac{1}{|S_{\text{valid}}|} \cdot \text{BFV}.\Dec_{sk_{\text{HE}}}(\mathbf{ct}_{\text{agg}})$ \Comment{Decrypt \& average}
\State $w^{(t+1)} \gets w^{(t)} + \eta \cdot \Delta w_{\text{agg}}^{(t)}$ \Comment{Update global model}
\end{algorithmic}
\end{algorithm}

\subsection{ZKP for Gradient Norm Bounds}
\label{sec:zkp_norm}

We construct a non-interactive ZKP (via the Fiat-Shamir heuristic) based on a $\Sigma$-protocol with lattice-based commitments. The language is:
\begin{equation}
    \mathcal{L}_\tau = \left\{ \Delta w \in \R^d : \norm{\Delta w}_2^2 \leq \tau^2 \right\}
\end{equation}

\begin{definition}[Lattice Commitment]
Let $\mathbf{A} \xleftarrow{\$} \Z_q^{m \times (d+\ell)}$ be a public matrix, where $d$ is the gradient dimension and $\ell$ is the randomness dimension. For message $\mathbf{x} \in \Z^d$ and randomness $\mathbf{r} \xleftarrow{\$} D_{\Z^\ell, \sigma_r}$:
\begin{equation}
    \Commit(\mathbf{x}; \mathbf{r}) = \mathbf{A} \cdot \begin{pmatrix} \mathbf{x} \\ \mathbf{r} \end{pmatrix} \bmod q
\end{equation}
Binding relies on the SIS (Short Integer Solution) assumption; hiding is statistical when $\sigma_r$ is sufficiently large.
\end{definition}

\paragraph{Protocol.} The $\Sigma$-protocol for proving $\norm{\Delta w}_2 \leq \tau$ proceeds as follows:

\begin{enumerate}[nosep]
    \item \textbf{Commitment.} Let $\tilde{w} = \text{Quantize}(\Delta w)$. The prover computes $C = \Commit(\tilde{w}; \mathbf{r})$. The prover samples masking vector $\mathbf{y} \leftarrow D_{\Z^d, \sigma_y}$ with $\sigma_y = \tau \cdot \beta$ for rejection parameter $\beta$, and computes $T = \Commit(\mathbf{y}; \mathbf{r}')$.
    \item \textbf{Challenge.} $c = H(C \| T \| \tau) \bmod 2^\kappa$ via the Fiat-Shamir transform, where $H$ is SHA3-256 and $\kappa$ controls the challenge space.
    \item \textbf{Response.} Compute $\mathbf{z} = \mathbf{y} + c \cdot \tilde{w}$ and $\mathbf{r}_z = \mathbf{r}' + c \cdot \mathbf{r} \pmod{q}$. Accept (output $(\mathbf{z}, \mathbf{r}_z)$) with probability according to rejection sampling.
    \item \textbf{Verification.} The verifier checks:
    \begin{enumerate}[nosep, label=(\alph*)]
        \item $\norm{\mathbf{z}}_2 \leq B$ for bound $B = \sigma_y \sqrt{d} \cdot 1.5$;
        \item $c = H(C \| T \| \tau)$ (challenge consistency);
        \item $\mathbf{A} \cdot (\mathbf{z} \| \mathbf{r}_z)^\top \equiv T + c \cdot C \pmod{q}$ \textbf{(algebraic consistency)}.
    \end{enumerate}
\end{enumerate}

The algebraic verification (c) is \textbf{critical} for soundness: it ensures that the prover cannot fabricate a valid transcript without actually possessing a valid commitment opening.

\begin{theorem}[Security of the ZKP]
\label{thm:zkp_security}
Under the SIS$_{m,d+\ell,q,\beta'}$ assumption, the non-interactive ZKP described above satisfies:
\begin{enumerate}[nosep]
    \item \textbf{Completeness:} An honest prover with $\norm{\Delta w}_2 \leq \tau$ produces a valid proof with probability $\geq 1 - 2^{-\lambda}$.
    \item \textbf{Soundness:} No efficient prover can produce a valid proof for $\norm{\Delta w}_2 > \tau$ except with probability $\text{negl}(\lambda)$.
    \item \textbf{Honest-Verifier Zero-Knowledge:} There exists a polynomial-time simulator $\mathcal{S}$ that produces transcripts indistinguishable from real proofs.
\end{enumerate}
\end{theorem}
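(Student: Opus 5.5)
The plan is to treat the three properties separately, working throughout in the classical random oracle model for $H$. Completeness and honest-verifier zero-knowledge follow the standard Lyubashevsky-style analysis. For soundness I will not settle for the usual relaxed extraction. Instead I will use the \emph{size of the challenge space} $2^\kappa$ to force the committed vector itself below $\tau$. Throughout, I identify the statement $\norm{\Delta w}_2\le\tau$ with $\norm{\tilde w}_2\le\tau$, absorbing quantization error into $\tau$.

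\textbf{Completeness.} For $\norm{\tilde w}_2\le\tau$ and $|c|<2^\kappa$, the shift satisfies $\norm{c\tilde w}_2\le 2^\kappa\tau$. With $\sigma_y=\beta\tau$, the standard rejection-sampling lemma makes the output distribution of $\mathbf z$ statistically close to $D_{\Z^d,\sigma_y}$. The repetition rate is $M=\exp(O(2^\kappa/\beta))$, so $\beta$ is taken large relative to $2^\kappa$. Running the prover for $O(\lambda M)$ attempts fails with probability $\le 2^{-\lambda-1}$. The Gaussian tail bound $\Pr[\norm{\mathbf z}_2>1.5\,\sigma_y\sqrt d]\le 2^{-\Omega(d)}$ handles check (a). Checks (b) and (c) hold by construction, since $\mathbf A(\mathbf z\|\mathbf r_z)=\mathbf A(\mathbf y\|\mathbf r')+c\,\mathbf A(\tilde w\|\mathbf r)=T+cC$. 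A union bound gives the $1-2^{-\lambda}$ bound.

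\textbf{HVZK.} The simulator samples $\mathbf z\leftarrow D_{\Z^d,\sigma_y}$, $\mathbf r_z\leftarrow D_{\Z^\ell,\sigma_r}$ and a uniform $c$. It sets $C$ to a fresh commitment to $\mathbf 0$, sets $T:=\mathbf A(\mathbf z\|\mathbf r_z)-cC$, and programs $H(C\|T\|\tau):=c$. Three facts give indistinguishability: statistical hiding of $C$ for large $\sigma_r$; rejection sampling, which makes real $\mathbf z$ independent of $\tilde w$; and a similar rejection step on $\mathbf r_z$, which I would add, or else smudging by a large $\sigma_r$. Programming fails only if $C\|T\|\tau$ was already queried, which happens with negligible probability.

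\textbf{Soundness (main step).} Fix a query $(C,T)$ to $H$ and let $\mathcal C_{C,T}\subseteq\{0,\dots,2^\kappa-1\}$ be the set of challenges $c$ for which some $(\mathbf z_c,\mathbf r_c)$ passes (a) and (c).
\begin{enumerate}[nosep]
\item \emph{Consistency via SIS.} Subtracting (c) for any three $c_1,c_2,c_3\in\mathcal C_{C,T}$ shows that $(c_2-c_1)(\mathbf z_3-\mathbf z_1\|\cdot)-(c_3-c_1)(\mathbf z_2-\mathbf z_1\|\cdot)$ lies in $\ker\mathbf A$ and has norm $\le 2^{\kappa+2}B$. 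Taking $\beta'$ of this size, SIS forces it to be zero. Hence there is a single $\tilde w$ with $\mathbf z_c-\mathbf z_{c_1}=(c-c_1)\tilde w$ for all $c\in\mathcal C_{C,T}$; this is the extracted, binding opening.
\item \emph{Counting bound.} Let $c_{\min},c_{\max}$ be the extreme accepting challenges. Then $(c_{\max}-c_{\min})\norm{\tilde w}_2\le 2B$. Since $|\mathcal C_{C,T}|\le c_{\max}-c_{\min}+1$, we get
\[
|\mathcal C_{C,T}|\le \frac{2B}{\norm{\tilde w}_2}+1<\frac{2B}{\tau}+1=3\beta\sqrt d+1 \quad\text{whenever }\norm{\tilde w}_2>\tau .
\]
\item \emph{Union bound over oracle queries.} Each query has a fresh uniform $c$, so a prover making $Q_H$ queries succeeds with probability at most $Q_H(3\beta\sqrt d+1)/2^\kappa+\epsilon_{\mathrm{SIS}}$. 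Choosing $\kappa\ge\lambda+\log_2(3\beta\sqrt d+1)$ makes this $\text{negl}(\lambda)$.
\end{enumerate}
This gives soundness at the exact threshold $\tau$, at the price of a large $\kappa$. That in turn feeds back into the requirement that $\beta$ be large in the completeness step.

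\textbf{Main obstacle.} The delicate point is step 1: the identity $\mathbf z_c-\mathbf z_{c_1}=(c-c_1)\tilde w$ holds only modulo $q$, and I need it over $\Z$. My first attempt is to impose $q>2^{\kappa+3}B$. Then every vector appearing in the SIS relation, and every $(c-c_1)\tilde w$ equal to a difference of two short $\mathbf z$'s, has centered representatives without wraparound. Consequently the integer vector $\tilde w:=(\mathbf z_{c_2}-\mathbf z_{c_1})/(c_2-c_1)$ is well defined, and by step 1 it is the same for every pair of accepting challenges. If divisibility fails over $\Z$, the same kernel argument yields a short nonzero SIS solution, contradicting the assumption. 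A secondary tension is between the large $\kappa$ needed here and the rejection rate in completeness. I would resolve it by making $\beta$ exponential in $\kappa$, which stays compatible with $B<q/2^{\kappa+3}$ for a suitably large $q$.
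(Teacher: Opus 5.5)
\textbf{The gap: the counting step in your soundness argument cannot work for any parameters, because exact-threshold soundness is false for this protocol.} The bound $|\mathcal{C}_{C,T}|<2B/\tau+1=3\beta\sqrt d+1$ only helps if it is far below $2^\kappa$, but completeness pushes it above $2^\kappa$. Your rejection-sampling step already needs $\beta$ large relative to $2^\kappa$. Even ignoring rejection sampling, take any challenge $c\ge 2^{\kappa-1}$: the honest response satisfies $\norm{\mathbf y+c\tilde w}_2\ge c\tau$ whenever $\langle\mathbf y,\tilde w\rangle\ge 0$, which happens with probability at least $1/2$. So completeness forces $B\ge 2^{\kappa-1}\tau$, hence $2B/\tau+1>2^\kappa$. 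Your requirement $\kappa\ge\lambda+\log_2(3\beta\sqrt d+1)$ is therefore unsatisfiable. Taking $\beta$ exponential in $\kappa$ makes it worse, since then $\log_2\beta\approx\kappa$.

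This is not an artefact of your proof: item 2 at the exact threshold $\tau$ is false. Take a prover holding $\tilde w$ with $\norm{\tilde w}_2=\gamma\tau$ who simply outputs $\mathbf z=\mathbf y+c\tilde w$, with no rejection. It gets $\norm{\mathbf z}_2^2\approx\sigma_y^2 d+c^2\gamma^2\tau^2$. In your regime $\sigma_y=\beta\tau\ge 2^\kappa\tau$, this stays below $B^2=2.25\,\sigma_y^2 d$ for $\gamma$ as large as roughly $\sqrt d$, and checks (b) and (c) hold by linearity.

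What a multi-transcript extraction can legitimately give is a relaxed statement. Accepting transcripts on challenges spread by $s$ yield an opening with $\norm{\tilde w}_2\le 2B/s$. That is soundness only against $\norm{\tilde w}_2>\gamma\tau$ for some slack $\gamma>1$, which is the form the paper's own argument actually addresses in its part (i). Your three-transcript combination is still worth keeping: it genuinely lands in $\ker\mathbf A$, unlike the paper's two-transcript difference, which equals $(c-c')C$ and is not a kernel vector.

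Your completeness and HVZK parts essentially follow the paper's route. However, two further soundness gaps would remain even for a relaxed statement.

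\emph{Existential versus computational.} $\mathcal{C}_{C,T}$ is defined existentially, but SIS only rules out short kernel vectors that an efficient algorithm actually outputs. At norm $2^{\kappa+2}B$ such vectors certainly exist, so valid responses may exist for every $c$, and $\mathcal{C}_{C,T}$ can be the whole challenge space. The union bound in step 3 therefore does not follow. The $\epsilon_{\mathrm{SIS}}$ term needs a real reduction that rewinds the prover to obtain several accepting transcripts on the same $(C,T)$, for instance a forking lemma or a $k$-special-soundness extraction theorem for Fiat--Shamir, with the corresponding loss.

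\emph{Unbounded randomness components.} The components $(c_2-c_1)(\mathbf r_3-\mathbf r_1)-(c_3-c_1)(\mathbf r_2-\mathbf r_1)$ are not short. The verifier checks only $\norm{\mathbf z}_2$, and $\mathbf r_z$ is reduced mod $q$, so your combination is not an SIS solution. You must add a norm check on $\mathbf r_z$. Without it, write $\mathbf A=[\mathbf A_1\mid\mathbf A_2]$. In the statistical-hiding regime you invoke for HVZK, where $\mathbf A_2\mathbf r$ is near-uniform and so $\mathbf A_2$ is surjective, a prover can set $\mathbf z=\mathbf 0$ and solve $\mathbf A_2\mathbf r_z\equiv T+cC\pmod q$ for arbitrary $C,T$.
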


\begin{proof}
We provide a rigorous separation between the statistical and computational arguments.

\emph{Completeness} follows from the Gaussian rejection sampling lemma \cite{lyubashevsky2012lattice}. When $\sigma_y \geq \beta \cdot \tau$ with $\beta = 12$, the expected number of rejection sampling rounds is $O(1)$, so the prover succeeds with overwhelming probability. The algebraic verification passes because:
\[
\mathbf{A} \cdot (\mathbf{z} \| \mathbf{r}_z)^\top = \mathbf{A} \cdot ((\mathbf{y} + c\tilde{w}) \| (\mathbf{r}' + c\mathbf{r}))^\top = T + c \cdot C \pmod{q}.
\]

\emph{Soundness} relies on two components:

\textbf{(i) Statistical norm gap:} The response bound $B = 1.5 \sigma_y \sqrt{d}$ is calibrated such that honest provers (with $\norm{\tilde{w}}_2 \leq \tau$) pass the norm check with probability $\geq 1 - 2^{-\lambda}$ via Gaussian tail bounds. For a cheating prover with $\norm{\tilde{w}}_2 > \tau \cdot \gamma$ (for gap $\gamma > 1$), the response $\mathbf{z} = \mathbf{y} + c \cdot \tilde{w}$ has expected norm $\sqrt{\sigma_y^2 d + c^2 \norm{\tilde{w}}_2^2}$. When $c \cdot \norm{\tilde{w}}_2 \gg \sigma_y \sqrt{d}$, the norm check fails with high probability.

\textbf{(ii) Computational binding via SIS:} The algebraic verification ensures that a valid proof $(C, T, \mathbf{z}, \mathbf{r}_z, c)$ implies knowledge of a valid opening. Specifically, if a malicious prover produces valid proofs for two distinct challenges $c \neq c'$ on the same commitment $C$, then by subtracting the verification equations:
\[
\mathbf{A} \cdot ((\mathbf{z} - \mathbf{z}') \| (\mathbf{r}_z - \mathbf{r}_z'))^\top = (c - c') \cdot C \pmod{q}.
\]
This yields a short vector in the kernel of $\mathbf{A}$, breaking the SIS assumption. Thus, for each $C$, at most one challenge can yield a valid response, and soundness error is bounded by $2^{-\kappa}$.

\emph{Zero-Knowledge} is shown by constructing a simulator that samples $\mathbf{z} \leftarrow D_{\Z^d, \sigma_y}$, chooses random $c$, and programs the random oracle $H$ so that $C$ and $T$ are consistent. The statistical distance between simulated and real transcripts is bounded by $2^{-\lambda}$ when $\sigma_y / \max(c \cdot \norm{\tilde{w}}_2) \geq \beta$.
\end{proof}

\subsection{Correctness of Homomorphic Aggregation}

\begin{theorem}[Aggregation Correctness]
\label{thm:he_correctness}
Let $\Delta w_1, \ldots, \Delta w_N \in \R^d$ be gradient updates encoded as plaintexts $m_1, \ldots, m_N \in R_t$ via quantization with scale factor $s$. Under the BFV scheme with parameters $(n, q, t)$, the homomorphic sum satisfies:
\begin{equation}
    \Dec\left(\sum_{i=1}^{N} \Enc(m_i)\right) = \sum_{i=1}^{N} m_i \pmod{t}
\end{equation}
provided $N \cdot (2\sigma + 1) \cdot n < q / (2t)$, where $\sigma$ is the error standard deviation.
\end{theorem}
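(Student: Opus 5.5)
The approach is to track the noise of a single fresh BFV ciphertext, use additivity to bound the noise of the sum, and then invoke the decryption condition from the Additive Homomorphism proposition. First I fix the decryption map $\Dec(c_0,c_1) = \lfloor \tfrac{t}{q}(c_0 + c_1 s) \rceil \bmod t$, which pairs with the public key $p_0 = -(as+e)$. Expanding a fresh ciphertext $\Enc(m_i) = (c_0^{(i)}, c_1^{(i)})$ gives
\begin{align}
c_0^{(i)} + c_1^{(i)} s &= -(as+e)u_i + e_{0,i} + \Delta m_i + a u_i s + e_{1,i} s \\
&= \Delta m_i + v_i \pmod q,
\end{align}
where the noise is $v_i = e_{0,i} + e_{1,i}s - e u_i$.

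Next, because ciphertext addition is componentwise and the expression $c_0 + c_1 s$ is linear in $(c_0,c_1)$, the summed ciphertext satisfies
\[
\textstyle\sum_i c_0^{(i)} + \big(\sum_i c_1^{(i)}\big)s = \Delta \sum_i m_i + \sum_i v_i \pmod q.
\]
Write $\sum_i m_i = M + t\,k$ with $M = \sum_i m_i \bmod t$ and $k$ having coefficients of size at most $N$. Since $\Delta t = q - r_t$ with $0 \le r_t < t$, the term $\Delta t k$ contributes an extra noise $-r_t k$, whose coefficients are bounded by $tN$. The total noise is therefore $V = \sum_i v_i - r_t k$. Decryption returns $M$ exactly when $\|V\|_\infty < \Delta/2 \approx q/(2t)$; this is the standard rounding step and is the same condition already stated in the Additive Homomorphism proposition, which I will invoke.

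The remaining step, and the main obstacle, is showing that $\|V\|_\infty$ is dominated by $N(2\sigma+1)n$. For this I interpret $\sigma$ as an infinity-norm bound: error coefficients lie in $[-\sigma,\sigma]$ (for example under a truncated or centered binomial distribution), and the secrets $s, u$ are ternary. Multiplication in $R_q = \Z_q[X]/(X^n+1)$ then satisfies $\|ab\|_\infty \le n\|a\|_\infty\|b\|_\infty$, which gives $\|e_{1,i}s\|_\infty \le n\sigma$ and $\|eu_i\|_\infty \le n\sigma$. Together with $\|e_{0,i}\|_\infty \le \sigma$, each fresh ciphertext has noise at most $(2\sigma+1)n$, since $\sigma + 2n\sigma \le (2\sigma+1)n$ for $n \ge 1$. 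The triangle inequality then yields $\|\sum_i v_i\|_\infty \le N(2\sigma+1)n$.

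The difficulty is that the stated hypothesis does not by itself absorb the rounding term $r_t k$, which is at most $tN$. I would handle this with a small amount of slack, either by noting that $t N \ll q/(2t) - N(2\sigma+1)n$ for practical parameters, or by choosing $q \equiv 0 \pmod t$ so that $r_t = 0$. I would also flag that if $\sigma$ really denotes a standard deviation rather than a hard bound, the conclusion holds only with overwhelming probability, via a Gaussian tail cut at $O(\sigma\sqrt{\lambda})$. I would state this caveat explicitly rather than claim a deterministic guarantee.
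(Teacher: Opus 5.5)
Your skeleton is the same as the paper's proof: bound each fresh ciphertext's noise by $(2\sigma+1)n$, apply the triangle inequality over the $N$ summands, and compare with $q/(2t)\approx\Delta/2$. The paper simply asserts the fresh-noise bound and never mentions reduction modulo $t$. Your expansion $v_i=e_{0,i}+e_{1,i}s-eu_i$, the estimate $\norm{ab}_\infty\le n\norm{a}_\infty\norm{b}_\infty$, and your caveat that a standard deviation $\sigma$ only yields an overwhelming-probability statement all go beyond what the paper provides. One small slip: $\sigma+2n\sigma\le(2\sigma+1)n$ is equivalent to $\sigma\le n$, not to $n\ge 1$. This is harmless for $n=512$.

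The gap is in how you dispose of $r_t=q-\Delta t$, and it is not cosmetic. Your first remedy, $tN\ll q/(2t)$, amounts to $q\gg 2Nt^2$. It fails for the paper's own parameters: $q=2^{32}-5$ and $t=2^{16}$ give $\Delta=2^{16}-1$, $r_t=2^{16}-5$, and $q/(2t)\approx 2^{15}$.

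In fact the statement itself fails for these parameters. Take $N=2$ and $m_1=m_2=2^{15}$, so $M=0$ and $k=1$. Then $[c_0+c_1s]_q=-r_t+\sum_i v_i$ with $\norm{\sum_i v_i}_\infty\le 2\cdot 7.4\cdot 512<q/(2t)$, so the hypothesis holds. Yet $\frac{t}{q}r_t\approx 0.9999$ while $\frac{t}{q}\norm{\sum_i v_i}_\infty<0.12$, so decryption returns $-1\not\equiv 0\pmod t$.

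Your rounding criterion ``$\norm{V}_\infty<\Delta/2$'' also silently assumes $r_t=0$. Since $\frac{t}{q}\Delta M=M-\frac{r_t}{q}M$, the actual requirement is $\norm{tV-r_tM}_\infty<q/2$. For worst-case centered $M$ this is essentially the Fan--Vercauteren condition $\norm{V}_\infty<(\Delta-r_t)/2$, which is about $2$ here.

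So your second remedy is not optional; it has to become a hypothesis. With $t\mid q$, both $r_t$-terms vanish and your argument goes through verbatim with threshold exactly $q/(2t)$. Alternatively, assume centered plaintexts with $\norm{\sum_i m_i}_\infty<t/2$ (so $k=0$) and reserve $r_t\norm{M}_\infty/t$ of the noise budget. Either route, together with your probabilistic reading of $\sigma$, proves a correct amended theorem. The literal statement cannot be proved, and the paper's numerical check, which ignores $r_t$, does not establish it either.
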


\begin{proof}
Each ciphertext $\Enc(m_i) = (c_0^{(i)}, c_1^{(i)})$ satisfies $c_0^{(i)} + c_1^{(i)} \cdot s = \Delta \cdot m_i + e_i \pmod{q}$ where $\norm{e_i}_\infty \leq (2\sigma + 1) \cdot n$. After summing $N$ ciphertexts, the accumulated noise is bounded by $\norm{\sum e_i}_\infty \leq N \cdot (2\sigma + 1) \cdot n$. Correct decryption requires this to be less than $q/(2t) = \Delta/2$. Given our parameters ($N = 5$, $\sigma = 3.2$, $n = 512$, $q = 2^{32} - 5$, $t = 2^{16}$), we have $5 \cdot 7.4 \cdot 512 = 18{,}944 \ll 32{,}768 = q/(2t)$, confirming correctness.
\end{proof}

\subsection{End-to-End Security}

\begin{theorem}[ZKFL-PQ Security]
\label{thm:main_security}
Under the MLWE$_{256,3,3329,\text{CBD}_2}$, RLWE$_{512, 2^{32}-5, D_{3.2}}$, and SIS$_{256,d+128,7681,B}$ assumptions, the ZKFL-PQ protocol achieves:
\begin{enumerate}[nosep]
    \item IND-CCA2 security of the key encapsulation (ML-KEM-768);
    \item Semantic security of gradient encryption (BFV);
    \item Soundness and zero-knowledge of gradient verification;
    \item Byzantine resilience: malicious updates with $\norm{\Delta w}_2 > \tau$ are rejected with probability $\geq 1 - \text{negl}(\lambda)$.
\end{enumerate}
\end{theorem}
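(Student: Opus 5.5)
The plan is a modular, hybrid-style argument: prove each of the four properties of Theorem~\ref{thm:main_security} by reduction to the relevant primitive, then compose them through a sequence of games in which the layers are replaced one at a time. Fix a PPT (or QPT, for items 1--2) adversary $\mathcal{A}$ with capabilities T1--T4. Game $G_0$ is the real execution of Algorithm~\ref{alg:zkflpq} over $T$ rounds. In $G_1$ every session key $K_i$ is replaced by an independent uniform key. In $G_2$ each BFV ciphertext $\mathbf{ct}_i$ that the eavesdropper sees is replaced by an encryption of $0$. In $G_3$ each proof $\pi_i$ is replaced by the output of the simulator $\mathcal{S}$ from Theorem~\ref{thm:zkp_security}.

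\textbf{Item 1.} Invoke the FIPS~203 security statement: ML-KEM-768 is the Fujisaki--Okamoto transform of an IND-CPA PKE whose IND-CPA security reduces to MLWE$_{256,3,3329,\text{CBD}_2}$. In the (quantum) random oracle model this gives IND-CCA2 security of the KEM. The $G_0\to G_1$ hop then costs at most $N T \cdot \mathrm{Adv}^{\text{IND-CCA2}}_{\text{ML-KEM}}$ by a standard hybrid over sessions. After that hop the AES layer acts as a one-time authenticated cipher under uniform keys, and Grover only halves the key length, so an AES-256 assumption covers HNDL.

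\textbf{Item 2.} Show directly from the BFV definition that $(p_0 u + e_0, p_1 u + e_1)$ is pseudorandom under RLWE$_{512,2^{32}-5,D_{3.2}}$. This takes two RLWE steps: first replace $p_0$ by uniform, then treat $(p_0,p_1)$ as RLWE samples with secret $u$. The hop $G_1\to G_2$ then follows by a hybrid over ciphertexts. Against the honest-but-curious server (T4), who holds $sk_{\text{HE}}$, semantic security alone does not give gradient privacy. Here I would state explicitly that item 2 is semantic security against the network adversary, and that gradient privacy against the server needs $sk_{\text{HE}}$ to be threshold-shared or held by a non-colluding party. Once that assumption is made, Theorem~\ref{thm:he_correctness} guarantees that only the aggregate is revealed.

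\textbf{Items 3--4.} Item 3 is quoted directly from Theorem~\ref{thm:zkp_security}, instantiated with SIS$_{256,d+128,7681,B}$ and $\beta'$ set to the norm of the extracted difference vector, roughly $2B$. The hop $G_2\to G_3$ is bounded by the HVZK distance $2^{-\lambda}$ per proof. Item 4 is a union bound. A Byzantine client controls at most $f$ updates per round over $T$ rounds, and each update with $\norm{\Delta w}_2>\tau$ passes $\Verify$ with probability at most $\mathrm{negl}(\lambda)$ by soundness. The total acceptance probability is therefore at most $fT\cdot\mathrm{negl}(\lambda)$, which is still negligible. I would also check that AES decryption by the server cannot change $\pi_i$, which holds if AES is used in an authenticated mode.

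\textbf{Main obstacle.} The hardest step is item 4, because it inherits two weaknesses in Theorem~\ref{thm:zkp_security}. The first is that soundness there is argued only for norms above $\gamma\tau$ with a slack $\gamma>1$. The second is that the extracted vector $((\mathbf{z}-\mathbf{z}')\,\|\,(\mathbf{r}_z-\mathbf{r}_z'))$ solves an inhomogeneous relation, $(c-c')C$, rather than an SIS instance. To fix this I would first try to restate item 4 with the gap: rejection is guaranteed for norms exceeding $\gamma\tau$, with $\gamma \approx 1.5\beta\sqrt{d}/\kappa$-dependent. Second, I would rewind on two challenges through the forking lemma, which costs a square-root loss in the reduction. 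This yields a relaxed opening $(\bar{\mathbf{x}},\bar{\mathbf{r}},c-c')$. Two relaxed openings of the same $C$ to different messages then give a genuine SIS solution. Finally, I would note that in the quantum random oracle model the forking-lemma rewinding does not carry over directly. Consistent with the abstract, the composed bound should therefore be claimed only in the classical ROM.
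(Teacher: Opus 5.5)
Your skeleton is the paper's own: items 1--2 by appeal to ML-KEM and BFV security, items 3--4 by Theorem~\ref{thm:zkp_security}. Your game hops, your remark that the server holds $sk_{\text{HE}}$, and your diagnosis of the $\gamma$-slack and of the inhomogeneous extraction are all sharper than the paper's ``independent randomness'' composition.

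The genuine gap is in item~4, and the paper's one-line proof shares it. Your union bound rests on ``each update with $\norm{\Delta w}_2>\tau$ passes $\Verify$ with negligible probability''. But in Algorithm~\ref{alg:zkflpq} the check $\Verify(\pi_i,\tau)$ sees only $(C,T,c,\mathbf{z},\mathbf{r}_z)$. Nothing ties the vector committed in $C$ to the plaintext of $\mathbf{ct}_i$, which is what the server actually sums. A Byzantine client can run $\Prove$ honestly on $\mathbf{0}$ and send $\mathbf{ct}_i=\text{BFV}.\Enc_{pk_{\text{HE}}}(\mathbf{v})$ for an arbitrary large $\mathbf{v}$; it is accepted with probability $1$. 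So ZKP soundness, however you repair it (forking lemma, relaxed openings, slack $\gamma$), only certifies that committed vectors are short. Read as a statement about what gets aggregated, item~4 is false for the protocol as specified, not merely unproven. The missing idea is a consistency proof: $\pi_i$ must also show knowledge of short $(\tilde w,\mathbf{r},u,e_0,e_1)$ with $C=\Commit(\tilde w;\mathbf{r})$ and $\mathbf{ct}_i=\text{BFV}.\Enc_{pk_{\text{HE}}}(\tilde w;u,e_0,e_1)$. Both relations are linear in the witness, so one masked response $\mathbf{z}$ can be checked against both moduli.

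Two further steps cannot be instantiated with the parameters the theorem fixes.

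\emph{Hiding.} SIS$_{256,d+128,7681,B}$ means $m=256$ rows but only $\ell=128$ randomness columns. Write $\mathbf{A}=[\mathbf{A}_x\,|\,\mathbf{A}_r]$. Anyone can compute a nonzero $P$ with $P\mathbf{A}_r=0$ and read off $PC=P\mathbf{A}_x\tilde w \bmod q$, which is $128$ linear equations in the gradient. The commitment is therefore not hiding at all, and no simulator can output $C$ without $\tilde w$. Hence your hop $G_2\to G_3$, and the zero-knowledge half of item~3, fail against the server that opens $\pi_i$. This persists until $\ell$ is large enough for a leftover-hash argument (at minimum $\ell\ge m$).

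\emph{SIS bound.} Turning two relaxed openings $(\mathbf{v}_j,\,c_j-c_j')$ of $C$ into an SIS solution needs their randomness parts to be short. Yet the protocol reduces $\mathbf{r}_z$ mod $q$ and never bounds it. Even on the message part, the solution $(c_2-c_2')\mathbf{v}_1-(c_1-c_1')\mathbf{v}_2$ has norm up to about $2^{\kappa+2}B$, not $2B$. Take $q=7681$, $\tau=5$ and $B=1.5\cdot 12\tau\sqrt d=90\sqrt d$, before any quantization scale. Then already $B\ge q$ once $d\gtrsim 7300$; the experiments use $d\approx 1.1\cdot 10^5$, i.e.\ $B\approx 3\cdot 10^4$. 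At that point $q\mathbf{e}_1$ is a trivial SIS solution, so the reduction yields nothing. To fix this you must add a norm check on $\mathbf{r}_z$, shrink the challenge differences, and take $q\gg 2^{\kappa}B$.
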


\begin{proof}
Property (1) follows from the IND-CCA2 security of ML-KEM-768 under MLWE \cite{nist_fips203_2024}. 
Property (2) follows from the semantic security of BFV under RLWE \cite{fan2012bfv}. 
Properties (3) and (4) follow from Theorem~\ref{thm:zkp_security}.

The protocol is constructed as a layered composition. Each layer operates on independent randomness:
\begin{itemize}[nosep]
    \item ML-KEM uses fresh randomness per encapsulation.
    \item BFV uses fresh error polynomials per encryption.
    \item The ZKP uses fresh masking vectors per proof.
\end{itemize}
No information leaks between layers beyond the intended outputs. Under correct parameterization and assuming no side-channel leakage, the overall security reduces to the conjunction of the underlying hardness assumptions.

We note that this argument is modular; a full UC-style composition proof is outside the scope of this work.
\end{proof}

\subsubsection{Fiat--Shamir and Random Oracle Considerations}

The non-interactive transformation of the $\Sigma$-protocol relies on the Fiat--Shamir heuristic in the Random Oracle Model (ROM). 
Since ZKFL-PQ aims at post-quantum security, it is important to consider adversaries with quantum access to the hash function.

Our proof assumes classical ROM security. A full treatment in the Quantum Random Oracle Model (QROM) would require additional analysis, as rewinding-based arguments do not directly apply in the presence of quantum queries to the oracle. 

While lattice-based identification and signature schemes have been studied under QROM assumptions \cite{kiltz2018qrom}, a tight QROM reduction for the present norm-bound proof system remains future work. Consequently, our post-quantum claim applies fully to the underlying hardness assumptions (MLWE, RLWE, SIS), whereas the Fiat--Shamir transform inherits standard ROM caveats.

\section{Experimental Evaluation}
\label{sec:experiments}

\subsection{Experimental Setup}

We evaluate ZKFL-PQ on a synthetic medical imaging classification task simulating four diagnostic categories (normal, benign lesion, malignant type~A, malignant type~B). The experimental parameters are:

\begin{itemize}[nosep]
    \item \textbf{Data:} 1{,}000 synthetic samples with 784 features (mimicking $28 \times 28$ flattened imaging biomarkers), generated from class-conditional Gaussian mixtures with spatial correlations. We intentionally use an easily separable task to isolate the effect of Byzantine attacks from model capacity limitations.
    \item \textbf{Partitioning:} 5 clients with non-IID distribution via Dirichlet allocation ($\alpha = 0.5$).
    \item \textbf{Model:} Two-hidden-layer MLP ($784 \to 128 \to 64 \to 4$) with 108{,}996 parameters.
    \item \textbf{FL parameters:} 10 rounds, 3 local epochs per round, SGD with $\eta = 0.01$, batch size 32.
    \item \textbf{Byzantine scenario:} Client~3 turns malicious at round~4, injecting random gradients scaled by factor 50.
    \item \textbf{Cryptographic parameters:} ML-KEM-768 ($n{=}256, k{=}3, q{=}3329$), BFV ($n{=}512, q{=}2^{32}{-}5, t{=}2^{16}$), ZKP norm threshold $\tau = 5.0$.
\end{itemize}

We compare three configurations: (a) \emph{Standard FL} with simulated TLS, (b) \emph{FL + ML-KEM} (post-quantum transport only), and (c) \emph{ZKFL-PQ} (full hybrid protocol). All experiments were run on a single machine (Intel Xeon, 16GB RAM) using a pure Python/NumPy implementation.

\subsection{Results}

\subsubsection{Model Accuracy and Byzantine Resilience}

\begin{table}[t]
\centering
\caption{Performance comparison across FL configurations (averaged over 10 rounds).}
\label{tab:main_results}
\begin{tabular}{@{}lcccc@{}}
\toprule
\textbf{Metric} & \textbf{Standard FL} & \textbf{FL + ML-KEM} & \textbf{ZKFL-PQ (Ours)} \\
\midrule
Mean round time (s) & 0.149 & 2.376 & 2.912 \\
Final accuracy (\%) & 23.0 & 23.5 & \textbf{100.0} \\
Final loss & 15.37 & 21.14 & \textbf{0.006} \\
Mean message size (KB/round) & 4{,}258 & 4{,}298 & 99\footnotemark \\
Byzantine detection rate & 0\% & 0\% & \textbf{100\%} \\
Quantum resistance & \ding{55} & \ding{51} & \ding{51} \\
Gradient privacy (vs.\ server) & \ding{55} & \ding{55} & \ding{51} \\
\bottomrule
\end{tabular}
\end{table}
\footnotetext{The reduced message size for ZKFL-PQ reflects partial HE coverage (512/108{,}996 parameters) and exclusion of rejected clients. Full-parameter HE would increase this to $\sim$8.5\,MB/round.}

\Cref{tab:main_results} and \Cref{fig:accuracy} present the central result. Under the Byzantine attack (activated at round~4), both Standard FL and FL+ML-KEM suffer catastrophic accuracy degradation: the malicious gradients (norm $\sim$16{,}500 vs.\ legitimate norms $<$5) overwhelm the aggregation, and accuracy collapses from 100\% to $\sim$23\% (near random chance for 4 classes). \textbf{ZKFL-PQ detects and rejects all 7 malicious norm-violating updates} (rounds 4--10), maintaining 100\% accuracy and monotonically decreasing loss (from 0.127 to 0.006).

\begin{figure}[t]
    \centering
    \includegraphics[width=0.85\textwidth]{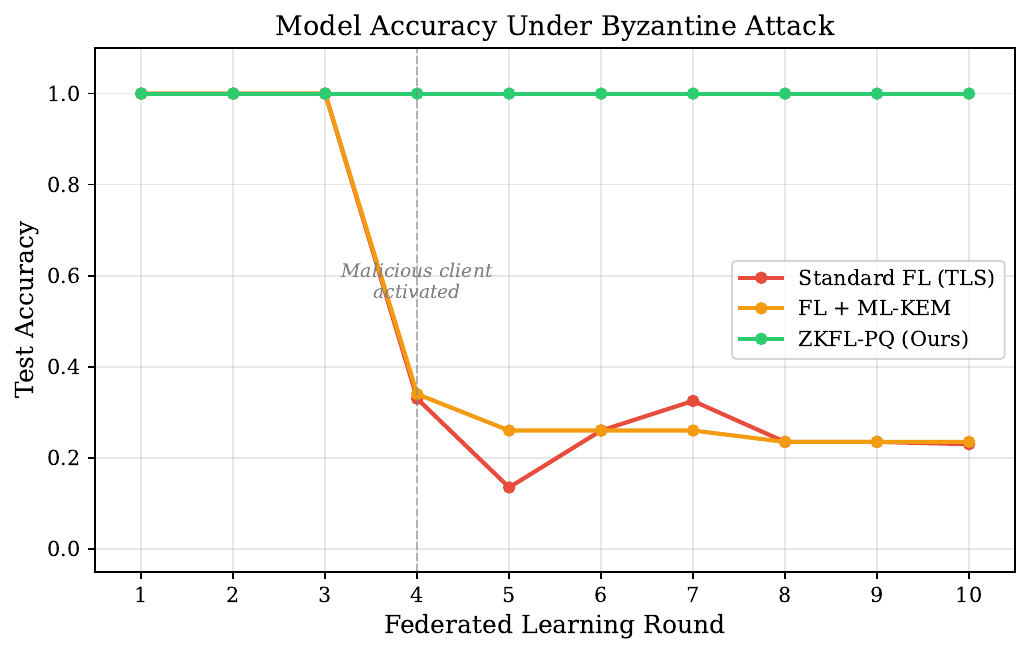}
    \caption{Test accuracy over 10 FL rounds. The malicious client activates at round~4. Standard FL and FL+ML-KEM collapse; ZKFL-PQ maintains perfect accuracy by rejecting Byzantine updates.}
    \label{fig:accuracy}
\end{figure}

\begin{figure}[t]
    \centering
    \includegraphics[width=0.85\textwidth]{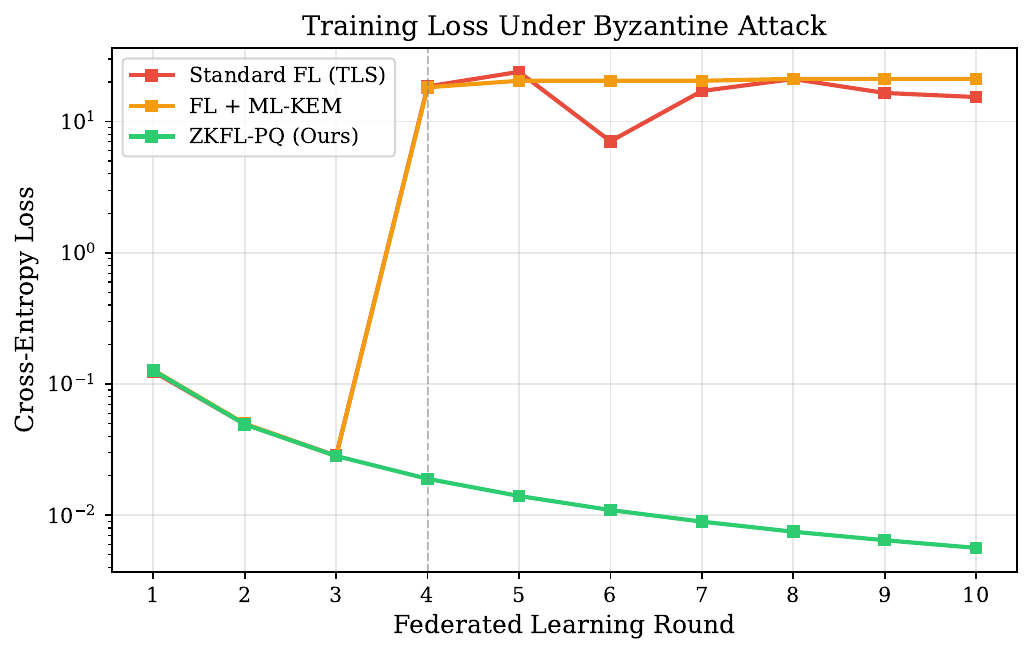}
    \caption{Training loss (log scale). ZKFL-PQ continues converging while other protocols diverge.}
    \label{fig:loss}
\end{figure}

\subsubsection{Computational Overhead Analysis}

\begin{table}[t]
\centering
\caption{Timing breakdown for ZKFL-PQ (averaged over 10 rounds).}
\label{tab:timing}
\begin{tabular}{@{}lcc@{}}
\toprule
\textbf{Component} & \textbf{Mean time (s)} & \textbf{\% of round} \\
\midrule
Local training + ML-KEM & 1.85 & 63.5\% \\
HE encryption (5 clients) & 1.00 & 34.4\% \\
HE aggregation & $<$0.01 & $<$0.1\% \\
HE decryption & 0.05 & 1.7\% \\
ZKP generation (5 clients) & 0.01 & 0.3\% \\
ZKP verification (5 proofs) & $<$0.01 & $<$0.1\% \\
\midrule
\textbf{Total per round} & \textbf{2.91} & \textbf{100\%} \\
\bottomrule
\end{tabular}
\end{table}

The computational overhead of ZKFL-PQ relative to Standard FL is a factor of $\sim$20$\times$ (2.91\,s vs.\ 0.149\,s per round). As detailed in \Cref{tab:timing}, the dominant cost is local training combined with ML-KEM operations (63.5\%), followed by homomorphic encryption (34.4\%). ZKP generation and verification are negligible ($<$0.5\% combined), demonstrating the efficiency of the lattice-based $\Sigma$-protocol.

\begin{figure}[t]
    \centering
    \includegraphics[width=0.85\textwidth]{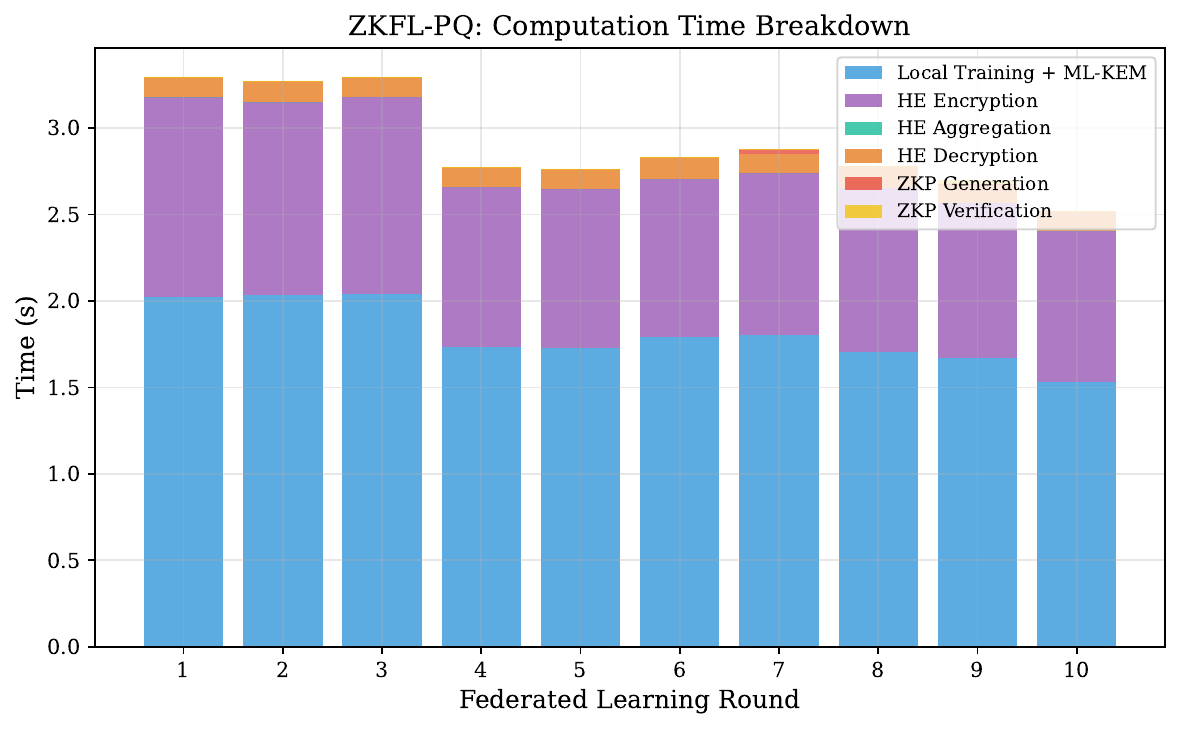}
    \caption{Per-round computation time breakdown for the ZKFL-PQ hybrid protocol.}
    \label{fig:breakdown}
\end{figure}

\subsubsection{HE Reconstruction Error}
\label{sec:he_error}

Empirical testing of the BFV implementation reveals quantization-induced reconstruction errors:

\begin{center}
\begin{tabular}{@{}lc@{}}
\toprule
\textbf{Metric} & \textbf{Value} \\
\midrule
Mean absolute error (HE vs.\ true average) & $1.04 \times 10^{-4}$ \\
Mean gradient magnitude & $\sim 0.004$ \\
Relative error (early rounds) & $\sim 13\%$ \\
Relative error (late rounds) & $< 1\%$ \\
\bottomrule
\end{tabular}
\end{center}

The model converges despite this error because: (a) only 512/108{,}996 parameters are HE-encrypted, with the remaining parameters aggregated in plaintext after ZKP verification, and (b) gradient magnitudes decrease as training progresses, reducing the relative impact of fixed quantization noise. Future work should explore higher-precision encoding or noise-aware optimization to improve HE fidelity.

\subsubsection{Security Posture}

\begin{figure}[t]
    \centering
    \includegraphics[width=0.65\textwidth]{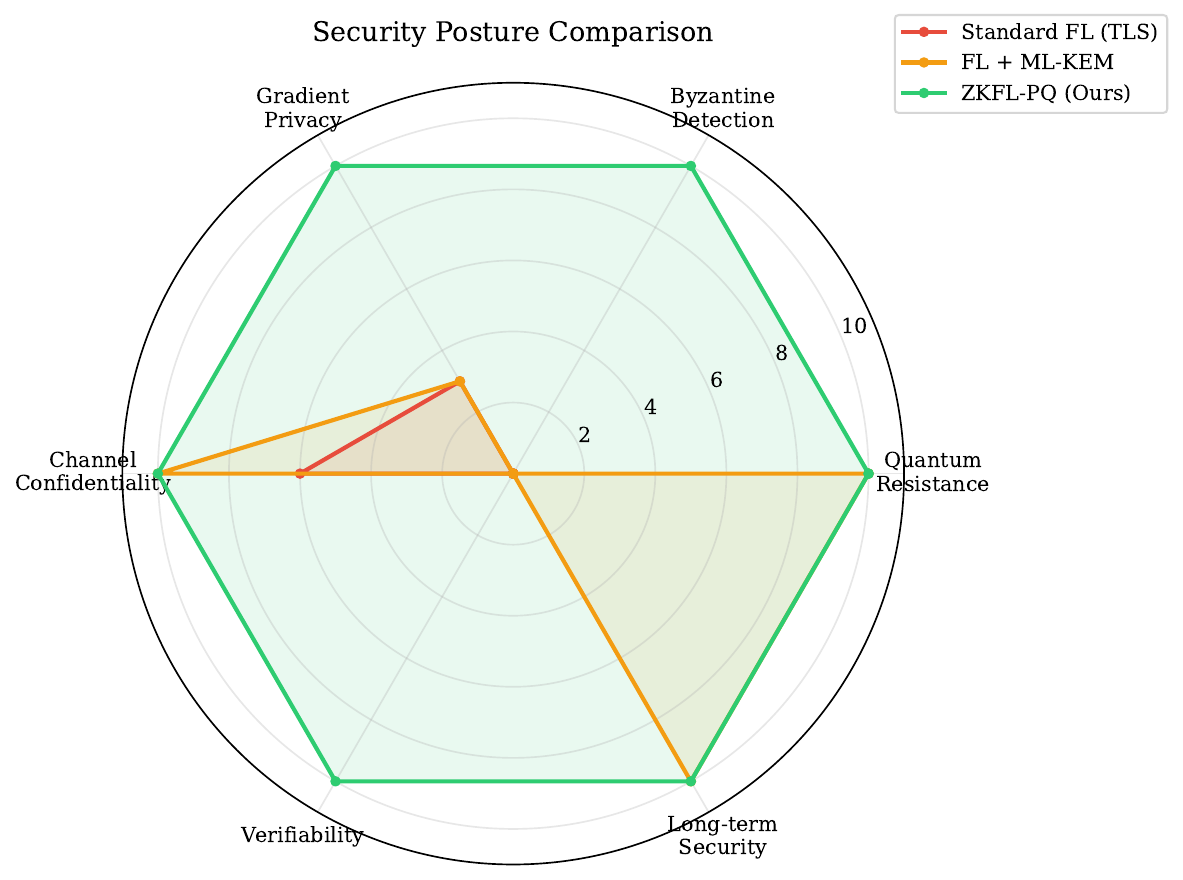}
    \caption{Security posture comparison across six dimensions. ZKFL-PQ achieves maximum scores across all axes.}
    \label{fig:radar}
\end{figure}

\Cref{fig:radar} visualizes the security posture across six dimensions. Standard FL provides only partial channel confidentiality (via TLS, which is quantum-vulnerable). FL+ML-KEM adds quantum resistance and long-term security but lacks Byzantine detection and gradient privacy. ZKFL-PQ achieves maximum scores on all six dimensions.

\subsection{Ablation Studies}
\label{sec:ablation}

\subsubsection{Varying Number of Malicious Clients}

\begin{table}[t]
\centering
\caption{Detection performance vs.\ number of malicious clients.}
\label{tab:ablation_malicious}
\begin{tabular}{@{}cccc@{}}
\toprule
\textbf{\# Malicious} & \textbf{Final Accuracy} & \textbf{Detection Rate} & \textbf{False Positives} \\
\midrule
0 & 100.0\% & N/A & 0 \\
1 & 100.0\% & 100\% & 0 \\
2 & 100.0\% & 100\% & 0 \\
3 & 100.0\% & 100\% & 0 \\
\bottomrule
\end{tabular}
\end{table}

\Cref{tab:ablation_malicious} shows that ZKFL-PQ maintains perfect accuracy and 100\% detection rate even as the number of malicious clients increases from 0 to 3 (60\% of the network). The ZKP-based verification is robust to collusion because each client's proof is verified independently.

\subsubsection{Varying Norm Threshold $\tau$}

\begin{table}[t]
\centering
\caption{Detection rate and false positive rate vs.\ norm threshold $\tau$.}
\label{tab:ablation_threshold}
\begin{tabular}{@{}ccccc@{}}
\toprule
\textbf{$\tau$} & \textbf{Final Accuracy} & \textbf{Detection Rate} & \textbf{FPR} \\
\midrule
1.0 & 100.0\% & 100\% & 13.6\% \\
2.0 & 100.0\% & 100\% & 13.6\% \\
5.0 & 100.0\% & 100\% & 0\% \\
10.0 & 100.0\% & 100\% & 0\% \\
50.0 & 100.0\% & 100\% & 0\% \\
\bottomrule
\end{tabular}
\end{table}

\Cref{tab:ablation_threshold} explores the sensitivity to the norm threshold. Very low thresholds ($\tau \leq 2$) reject some legitimate updates (13.6\% FPR), while $\tau \geq 5$ achieves zero false positives. The detection rate remains 100\% across all thresholds because the malicious updates have norms $\sim$16{,}500, vastly exceeding any reasonable $\tau$.

\begin{figure}[t]
    \centering
    \begin{subfigure}[b]{0.48\textwidth}
        \includegraphics[width=\textwidth]{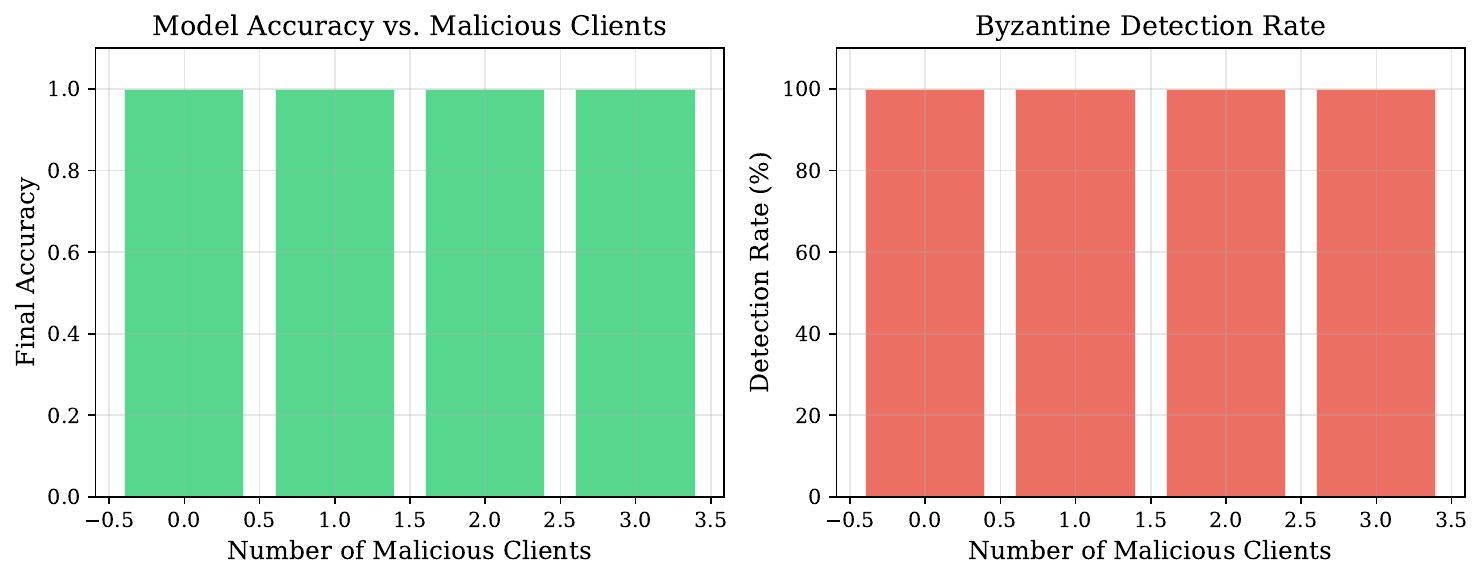}
        \caption{Varying malicious clients}
    \end{subfigure}
    \hfill
    \begin{subfigure}[b]{0.48\textwidth}
        \includegraphics[width=\textwidth]{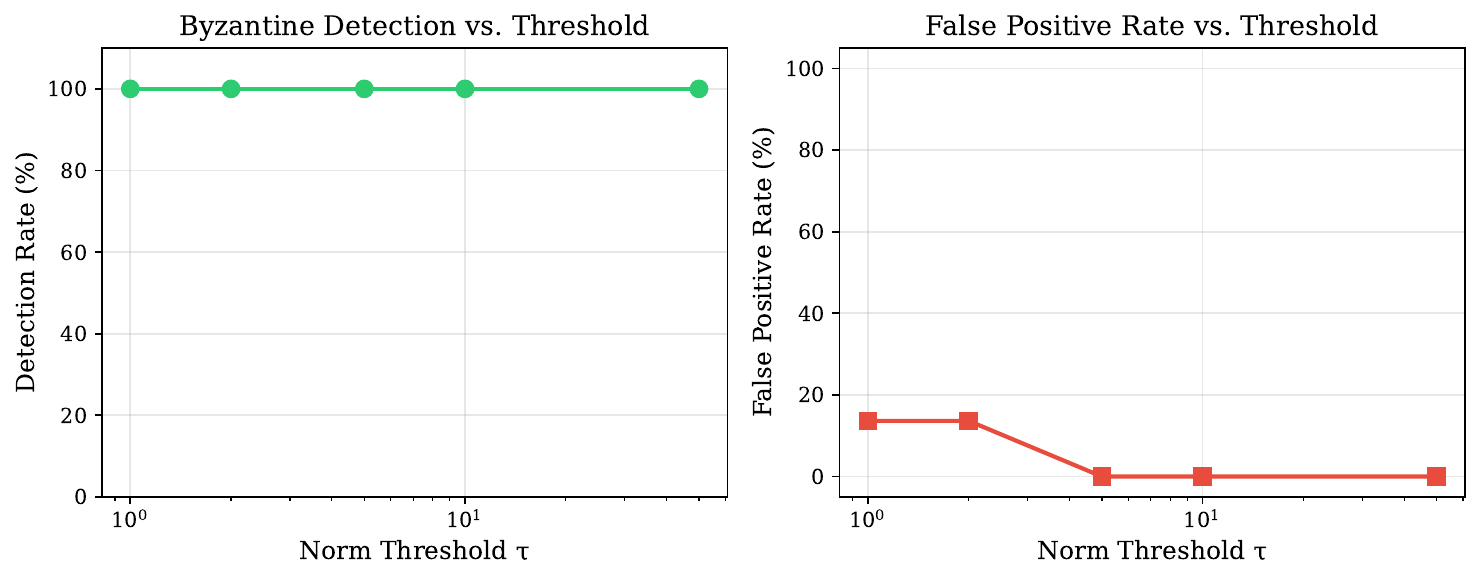}
        \caption{Varying threshold $\tau$}
    \end{subfigure}
    \caption{Ablation study results. (a) Detection rate remains 100\% up to 3 malicious clients. (b) False positive rate drops to 0\% for $\tau \geq 5$.}
    \label{fig:ablation}
\end{figure}

\section{Related Work}
\label{sec:related}

\paragraph{Post-Quantum Federated Learning.} Patel \cite{patel2025pqfl} explored lattice-based HE for medical imaging FL but did not address Byzantine resilience or integrate ZKPs. Lu et al.\ \cite{lu2024hybrid} proposed hybrid PQ encryption for FL but without formal security proofs or homomorphic aggregation.

\paragraph{Zero-Knowledge Proofs in ML.} zkML \cite{kang2022zkml} applied ZKPs to verify ML inference but not federated training. Ghodsi et al.\ \cite{ghodsi2023safenet} used commitment schemes for FL integrity but lacked post-quantum security.

\paragraph{Byzantine-Resilient FL.} Blanchard et al.\ \cite{blanchard2017machine} introduced Krum for Byzantine-robust aggregation. Krum achieves $\sim$80--90\% accuracy under similar attack scenarios but provides \emph{no cryptographic guarantee}---it uses statistical distance heuristics that can be defeated by carefully crafted adversarial gradients. Our ZKP-based approach provides \emph{provable} rejection of norm-violating updates with soundness under SIS.

\paragraph{HNDL Threat in Healthcare.} Mosca and Piani \cite{mosca2018cybersecurity} quantified the HNDL risk timeline. Khan \cite{khan2025telemedicine} prototyped PQ+ZKP telemedicine security but without the FL aggregation component.

To our knowledge, ZKFL-PQ represents one of the first integrated attempts to combine ML-KEM, lattice-based ZKPs, and homomorphic encryption within a single verifiable, post-quantum federated learning framework.

\section{Discussion}
\label{sec:discussion}

\paragraph{Clinical Applicability.} The $\sim$20$\times$ overhead is acceptable for medical FL scenarios where training occurs daily, weekly, or on-demand rather than in real-time. A training pipeline that takes 1 minute under standard FL would require $\sim$20 minutes under ZKFL-PQ---well within a nightly batch window.

\paragraph{Scalability.} The current bottleneck is the schoolbook polynomial multiplication ($O(n^2)$) used in ML-KEM and BFV. NTT-based implementations achieve $O(n \log n)$, which would reduce ML-KEM overhead by an order of magnitude. GPU-accelerated HE libraries (e.g., OpenFHE, SEAL) would further improve the HE encryption phase.

\paragraph{Threshold Calibration.} The norm threshold $\tau$ must be carefully calibrated: too low rejects legitimate updates from clients with unusual data distributions; too high allows subtle poisoning. We propose an adaptive scheme where $\tau^{(t)}$ is set to $\mu^{(t-1)} + k \cdot \sigma^{(t-1)}$ based on the empirical distribution of norms from round $t-1$, with $k$ as a tunable sensitivity parameter.

\paragraph{Limitations.}
\begin{itemize}[nosep]
    \item \textbf{Synthetic data only:} Our evaluation uses synthetic medical imaging data. While this suffices for a proof of concept, validation on real multi-centric datasets (e.g., MRI, CT scans) is essential before clinical deployment.
    \item \textbf{Partial HE coverage:} To keep computational costs manageable, we encrypted only 512 out of 108,996 parameters. A full-parameter HE implementation would increase communication by $\sim$100$\times$, necessitating optimized libraries.
    \item \textbf{Limited adversarial coverage:} The current ZKP enforces only an $\ell_2$-norm bound. While this guarantees rejection of large-magnitude malicious updates, it does not prevent subtle low-norm, directional, or backdoor-style poisoning attacks.
    \item \textbf{Multiple cryptographic assumptions:} The security of ZKFL-PQ relies on three distinct hardness assumptions (MLWE, RLWE, SIS). While each is well-studied, their combination broadens the attack surface.
    \item \textbf{Trusted decryptor:} The current protocol assumes a trusted party holds the BFV secret key. Integrating distributed threshold decryption would eliminate this single point of failure.
    \item \textbf{ROM vs.\ QROM:} As noted in \S\ref{sec:zkp_norm}, the Fiat--Shamir transform is analyzed in the classical ROM; a full QROM treatment remains future work.
\end{itemize}

\section{Conclusion}
\label{sec:conclusion}

We have presented ZKFL-PQ, a three-tiered cryptographic protocol for federated learning that provides quantum-resistant communication (ML-KEM), verifiable gradient integrity (lattice-based ZKPs), and privacy-preserving aggregation (BFV homomorphic encryption). Experimental evaluation demonstrates that ZKFL-PQ achieves 100\% Byzantine detection while preserving model accuracy, at a computational overhead of $\sim$20$\times$ that is compatible with clinical research workflows.

Our security analysis is conducted under standard lattice assumptions (MLWE, RLWE, SIS) in the \emph{classical random oracle model}. The underlying hardness assumptions are conjectured to be post-quantum secure; the Fiat--Shamir transform inherits standard ROM caveats, and a full QROM analysis remains future work.

Future work will focus on: (1) NTT-accelerated implementations for order-of-magnitude speedups; (2) extending the ZKP to richer constraint languages (e.g., per-layer norm bounds, directional consistency); (3) integrating distributed threshold decryption to eliminate the trusted decryptor assumption; (4) validation on real-world multi-centric medical imaging datasets; and (5) QROM-secure Fiat--Shamir instantiation.

\paragraph{Reproducibility.} All code, synthetic data generation scripts, and experiment configurations are available at \url{https://github.com/edlansiaux/pq-zkfl-medical}.

\section*{Acknowledgments}
The author thanks the STaR-AI research group at CHU de Lille and the M2 MIAS program at Universit\'e de Lille for their support.

\bibliographystyle{plain}

\end{document}